\def\slashchar#1{\setbox0=\hbox{$#1$}           % set a box for #1
   \dimen0=\wd0                                 % and get its size
   \setbox1=\hbox{/} \dimen1=\wd1               % get size of /
   \ifdim\dimen0>\dimen1                        % #1 is bigger
      \rlap{\hbox to \dimen0{\hfil/\hfil}}#1 
   \else                                        % / is bigger
      \rlap{\hbox to \dimen1{\hfil$#1$\hfil}}/                                    \fi}
\newtheorem{prop}{Proposition}
\newtheorem{cor}{Corollary}
\newcommand\hide[1]{{}}
\title{\boldmath Significance Variables}
\author[a]{Benjamin Nachman}
\author[b]{Christopher G.~Lester}
\affiliation[a]{DAMTP, CMS, University of Cambridge, Wilberforce Road, Cambridge, CB3 0HA, U.K.}
\affiliation[b]{Cavendish Laboratory, Department of Physics, JJ Thomson Avenue, Cambridge, CB3 0HE, U.K.}
\emailAdd{bnachman@cern.ch}
\emailAdd{Lester@hep.phy.cam.ac.uk}
\abstract{
Many particle physics analyses which need to discriminate some background process from a signal ignore event-by-event resolutions of kinematic variables.  Adding this information, as is done for missing momentum significance, can only improve the power of existing techniques.  We therefore propose the use of significance variables which combine kinematic information with event-by-event resolutions.  We begin by giving some explicit examples of constructing optimal significance variables.  Then, we consider three applications: new heavy gauge bosons, Higgs to $\tau\tau$, and direct stop squark pair production.  We find that significance variables can provide additional discriminating power over the original kinematic variables: $\sim$ 20\% improvement over $m_T$ in the case of $H\rightarrow\tau\tau$ case, and $\sim$ 30\% impovement over $m_{T2}$ in the case of the direct stop search.  
}
\begin{document} 
\maketitle
\flushbottom

\section{Introduction}

There is a set of key observables which seem, hitherto, to have
received scant to non-existent attention in the 
literature. These observables are the {\em event-by-event}
resolutions of individual kinematic variables which constitute the
building blocks of most analyses at present.  Such analyses (which we will
call ``cut-based'') will, for the foreseeable future, continue to be
found in a large fraction of collider physics search papers, even
though more powerful techniques are available.\footnote{In the
appropriate context, any technique which make sensible and full use
of the joint likelihood of the data as a function of all relevant
parameters cannot be beaten.}  One of the main reasons that
cut-and-count usage remains strong, despite non-optimality, is the
perceived simplicity with which ``reasonable'' analyses can be
developed.  Against this backdrop we should ask: {{\em ``How can
event-by-event resolutions be used effectively within current
analyses without fundamentally changing the way they are
done?''}

\section{A concrete example}

Consider a kinematic variable $m$ which, in the absence of new physics and detector resolutions, has a classical maximum $M$.  For example, $m$ could be transverse momentum or the actual mass of some system of particles.   The usual procedure for using $m$ is to place a cut value $m_{cut}$ and then to count the number of events for which $m>m_{cut}$.  If this number significantly exceeds expectation, then one has evidence for new physics.  However, one can do better than this by including more information such as event-by-event resolutions (and the mass scale $M$).  For example, consider the probability $P_M$ that the measured value $m^{\text{observed}}$ for a fixed event exceeds the scale $M$.  Symbolically, this is 

\begin{align}
\label{defofP}
P_M=\Pr(m^{\text{(re)measured}}>M|R_m),
\end{align}

\noindent where $R_m$ is the resolution function\footnote{$p$ will be the generic symbol for a probability density function.} $p(m^{\text{(re)measured}}|m^{\text{observed}})$.  For general purposes, one assumes that $R_m$ is a Gaussian function centered at the measured value with a width given by $\sigma_m$.  In this case, we can explicitly compute $P_M$, as in Eq.~\ref{IntroduceX}.

\begin{align}
\label{IntroduceX}
P_M&=\int_M^\infty p(m^{\text{(re)measured}}|R_m)dm^{\text{(re)measured}}\\\nonumber
&=\frac{1}{\sqrt{2\pi\sigma_m}}\int_M^\infty \exp\left(\frac{-(m^{\text{(re)measured}}-m^{\text{observed}})^2}{2\sigma_m^2}\right)dm^{\text{(re)measured}}\\\nonumber
&=\frac{1}{2}\left(1+\text{erf}\left(\frac{m^{\text{observed}}-M}{\sigma_m\sqrt{2}}\right)\right).
\end{align}

\noindent Since the $\mathrm{erf}$ function is monotonic and smooth, the complete behavior of $P_M$ is determined by the quantity 

\begin{align}
X_M\equiv \frac{m^{\text{observed}}-M}{\sigma_m}.
\end{align}

\par

Perhaps surprisingly, very few analyses seem to use quantities like $X_M$.   In fact, so far as the authors are aware, the only variable of this type that has seen significant usage in the collider literature is the ``$E_T^{\text{miss}}$ significance'', not to be confused with $E_T^{\text{miss}}$.  The latter is the magnitude of the transverse momentum necessary for conservation in the plane perpendicular to the beam whereas $E_T^{\text{miss}}$ significance, first constructed at D\O~\cite{D0}, in its most complete form usually refers to the log of a likelihood ratio
\begin{align}
\label{METsig}
\log\left(\frac{p(\slashed{E}_T^{\text{}}=\slashed{E}_T^{\text{measured}})}{p(\slashed{E}_T^{\text{}}=0)}\right),
\end{align}
where $p(\slashed{E}_T^{\text{}}=x)$ is the probability density for remeasured valued of the missing transverse energy.  The purpose of $E_T^{\text{miss}}$ significance is to differentiate events with real missing energy from invisible particles like neutrinos from those without, and it is constructed from the resolution functions of all the objects used to construct the $E_T^{\text{miss}}$ itself.

For Gaussian resolutions, the $E_T^{\text{miss}}$ significance is a monotonic function of  $\left(\slashed{E}_T^{\text{measured}}\right)^2/2\sigma_{\slashed{E}_T}^2$.  In general, it can be tedious to precisely determine $\sigma$ on an event-by-event basis.  Therefore, one observes~\cite{ScalingATLAS,ScalingCMS} that $\sigma_{\slashed{E}_T}\propto \sqrt{H_T}$, the scalar sum of the visible $p_T$ in the event.  Then, an approximate $E_T^{\text{miss}}$ significance may be written as a monotonic function of $(E_T^{\text{miss}})^2/{H_T}$ and in fact, the most commonly used choice is $E_T^{\text{miss}}/\sqrt{H_T}$.

We note that the approximate $E_T^{\text{miss}}$ significance defined above {\em is} a realisation of $X_M$ in which (i) $M=0$, (ii) we assume a Gaussian resolution function centered at the measured $E_T^{\text{miss}}$, and (iii) $\sigma\propto\sqrt{H_T}$.

Even though $E_T^{\text{miss}}/\sqrt{H_T}$ and $E_T^{\text{miss}}$ and
are correlated, one can gain statistical power by considering
$E_T^{\text{miss}}/\sqrt{H_T}$ in addition to or instead of
$E_T^{\text{miss}}$ itself.  This has been shown in analyses spanning
a wide range of physics processes including Standard Model
measurements~\cite{D0Precision,CDFttbarCrossSection,ArielS,CDFAnamalousggET,CDFWWZZ}
and searches for the Higgs Boson~\cite{D0Higgs}, Dark
Matter~\cite{DarkCDF}, and Supersymmetric particles
~\cite{SingleLeptonStopATLAS,D0SUSY}.

Motivated by the gains found by using the missing energy significance
$E_T^{\text{miss}}/\sqrt{H_T}$ in addition to $E_T^{\text{miss}}$, we
want to see whether similar profits are to be had from building
significance related quantities for other kinematic variables.

% It is in the spirit of these
%studies that we now investigate the properties of $X$ for other
%kinematic variables.  Before giving explicit examples, we show that
%there are different way of constructing mass significance variables
%and that they are distinct from $m$.

\section{Significance variables}

There are many ways that cut-based analyses could be modified to make good use of event-by-event resolutions.  The least prescriptive (and in some cases least effective) method simply adds to each event the resolutions as additional variables in their own right upon which to make cuts.  Indeed, simply doing this and leaving a Multivariate Analysis (MVA) tool to find the best way of using the additional information will appeal to many.\footnote{It is straightforward to show (see Appendix~\ref{sec:proofsandstuff}) that the optimal way of making use of the information in a cut-based analysis is always equivalent to a cut on the ratio of the likelihoods of the event under the signal and background hypotheses, and MVA tools can often get pretty close to such cuts.}

However, readers will have noted that the physics of the preceding example of $E_T^{\text{miss}}$ significance motivated the formation of a very particular combination of the kinematic variable and its associated resolution into a single quantity, equivalent to the significance variable $X_M$, which may contain all of the relevant discriminatory information.   We would like to show that it is not unusual for most of the relevant resolution information to be condensed into a single simple $X$-like variable.  Furthermore, we will show that it is even commonplace under certain conditions -- principally those in which the signal and backgrounds are associated with different mass or energy scales.

Knowing that variables like $X_M$ frequently contain most of the relevant resolution information is  useful.  It means that a user keen to see whether an analysis can benefit from incorporating resolution information has a straightforward way of testing whether it might help.  For each event, using the description below, one can compute a $X_M$ significance variable for the kinematic quantity of interest, and then try placing a cut on  $X_M$ instead of (or perhaps in addition to) the cut on the kinematic variable on which his $X_M$ was based.

If it is desired to include resolution information in an analysis, the work necessary to compute that resolution an any particular kinematic variable is unavoidable, and specific to the analysis in question.  However it is important to note (i) that this work is the same regardless of whether the resolution be used in an MVA or in the construction of an $X_M$-like significance variable, and (ii) that the construction of an $X_M$-like significance variable is itself very simple, requiring only a subtraction, a division and the choice of a signal-background separation scale $M$.  Given that $X_M$-like variables are frequently close to optimal (as we show below) there seems little reason to avoid adding them to our toolkits.

Finally, before moving on to specific examples, we not the $X_M$ itself will not {\em always} be the optimal significance variable for an analysis.  Any case in which resolutions are significantly non-Gaussian may require, for optimality, the use of a significance variable based on the likelihood ratio as described in Appendix~\ref{sec:proofsandstuff}, or the use of an MVA tool to approximate the likelihood ratio procedure.  Nonetheless, our key message is that many analyses could make use of resolution information at the event-by-event level which they are presently throwing away, and that even if they do nothing else, analyses should consider using this information.  A simple way of using it, that captures most of the information thrown away is contained in an $X_M$-like significance variable, but where this is non-optimal, the resolution information can and should still be used either with an MVA or a dedicated derivation of the optimal significance variable(s) for the analysis in question.

\section{Some worked examples of {\em optimal} significance variables in {\em toy} models}

\subsection{The simplest case of all -- Gaussian resolution}

Consider a search for a physics processes using a single kinematic variable $m$.  Using the significance metric $\hat{s}(c)\equiv s/\sqrt{b}$, for $c$ a cut value,  we can ask the question how does $\max_c\hat{s}$ change if we also include some measure of the resolution on $m$?  In other words, what is the optimal combination of $m$ and $\sigma_m$ to maximize the significance metric $\hat{s}$?  To begin, consider a simple model in which the variable $m$ has a delta function distribution, $(1/N)dm_i/dN=\delta(m-M_i)$, where $i\in\{s,b\}$ (signal/background).  For example, suppose that $m=m_T$ in a class search for a heavy gauge boson in the letpon+missing energy channel.   Due to the Jacobian peak, most of the probability for $m$ is near $M_i$, and so this simple model may capture some aspects of the analysis.   Let the resolution functions of $m$ be Gaussian with width $\sigma$.  Then,
\begin{align}
\label{gaus}
p_i(m,\sigma)=g(\sigma)\frac{1}{\sqrt{2\pi\sigma^2}}\exp\left(-\frac{(m-M_i)^2}{2\sigma^2}\right),
\end{align}
where $g(\sigma)$ is the distribution of $\sigma$.  We assume that $g$ is not a delta function, otherwise the resolution information does not tell us anything.   For the reasons set out in Appendix~\ref{sec:proofsandstuff}, the optimal cut boundary on a combination of $m$ and $\sigma$ is a cut on the ratio $p_s(m,\sigma)/p_b(m,\sigma)$.  Dividing the probably functions from above and monotonically transforming the answer brings us to the conclusion that an appropriately chosen cut on the significance variable
\begin{align}
V_\text{opt}^\text{(Gaussian)} = \frac{m-(M_s+M_b)/2}{\sigma^2}
\end{align}
cannot be beaten.  We note that this significance variable is very similar to $X_M$ (with $M=(M_s+M_b)/2$) and only differs in the use of the variance instead of the standard deviation of the uncertainty in the denominator.

\subsection{More realistic asymmetric resolutions}

We now consider a variant of the previous example.  Up until now, we have studied only symmetric resolution smearing.  However, due to falling prior kinematic spectra, more generally we might expect asymmetric resolution functions.  Consider for example a Gumbel distribution for the resolution function:
\begin{align}
p_i(m)=\frac{1}{\beta}\exp\left(\frac{m-M_i}{\beta}\right)\exp\left(-\exp\left(\frac{m-M_i}{\beta}\right)\right)
\end{align}
We choose this probability density function because with the identification $\sigma=\frac{e}{\sqrt{2\pi}}\beta$, to second order in the Taylor expansion, the Gumbel and the Gaussian are the same.  The asymmetry in the Gumbel then is present at the third order.  In the above parameterization, the tail for the Gumbel is heavier on the left than the right, which represents the generic case in which events are more likely to have smeared from lower values due to falling priors.  As we saw in the previous example, it does not matter what $\beta$ weighting function we add to multiply $p_i$ by, so long as it does not depend on $i$, and this time we find that an appropriately chosen cut on 
\begin{align}
V_\text{opt}^\text{(Gumbel)} =\exp\left(\frac{m-M_b}{\beta}\right)-\exp\left(\frac{m-M_s}{\beta}\right)+\frac{M_b-M_s}{\beta}
\end{align}
cannot be bettered for discrimination of signal from background in this model.

\par

The lines of constant $p_s/p_b$ (equivalently the lines of constant $V_\text{opt}^\text{(Gumbel)}$) are richer than for the Gaussian case. In the uninteresting case where $m\ll M_b$ (and thus also $m\ll M_s$ as we will assume, without loss of generality, that there is a hierarchy of scales $M_s>M_b$), we have that the uncertainty parameter $\beta$ is the optimal cut value (i.e. $m$ does not give any information).  Since one looks at counts which exceed bounds, we are interested more in the kinematic maxima and thus when $m\sim M_i$ and when $m> M_i$.  If $M_b<m<M_s$, then the expression above reduces to the variable $X$ with $M=M_b$.  Likewise, if $m>M_s$ and $\beta$ is small compared $M_s-M_b$.   For $m>M_s$ and $\beta$ small compared $M_s-M_b$, both exponentials are large and we can reduce the expression to
\begin{align}
\exp\left(\frac{m-\bar{M}}{\beta}\right) \sinh\left(\frac{M_s-M_b}{\beta} \right)=\mathrm{constant}
\end{align}
where $\bar{M}$ is the average of $M_s$ and $M_b$.   The $\sinh$ term is relatively smaller and slowly varying and thus this is simply $X$ with $M=\bar{M}$.  Figure~\ref{Gumbel} shows a plot of $p_s(m,\beta)/p_b(m,\beta)$ for $M_b=80$ and $M_s=85$.  The level sets of Figure~\ref{Gumbel} correspond to the optimal combination of $m$ and $\beta$.  Straight lines indicate that $X$ is the optimal variable.   One can clearly see that for $m>M_b$, the level sets are straight lines and thus some form of $X$ is optimal.  

\begin{figure}
\begin{center}
\includegraphics[width=0.4\textwidth]{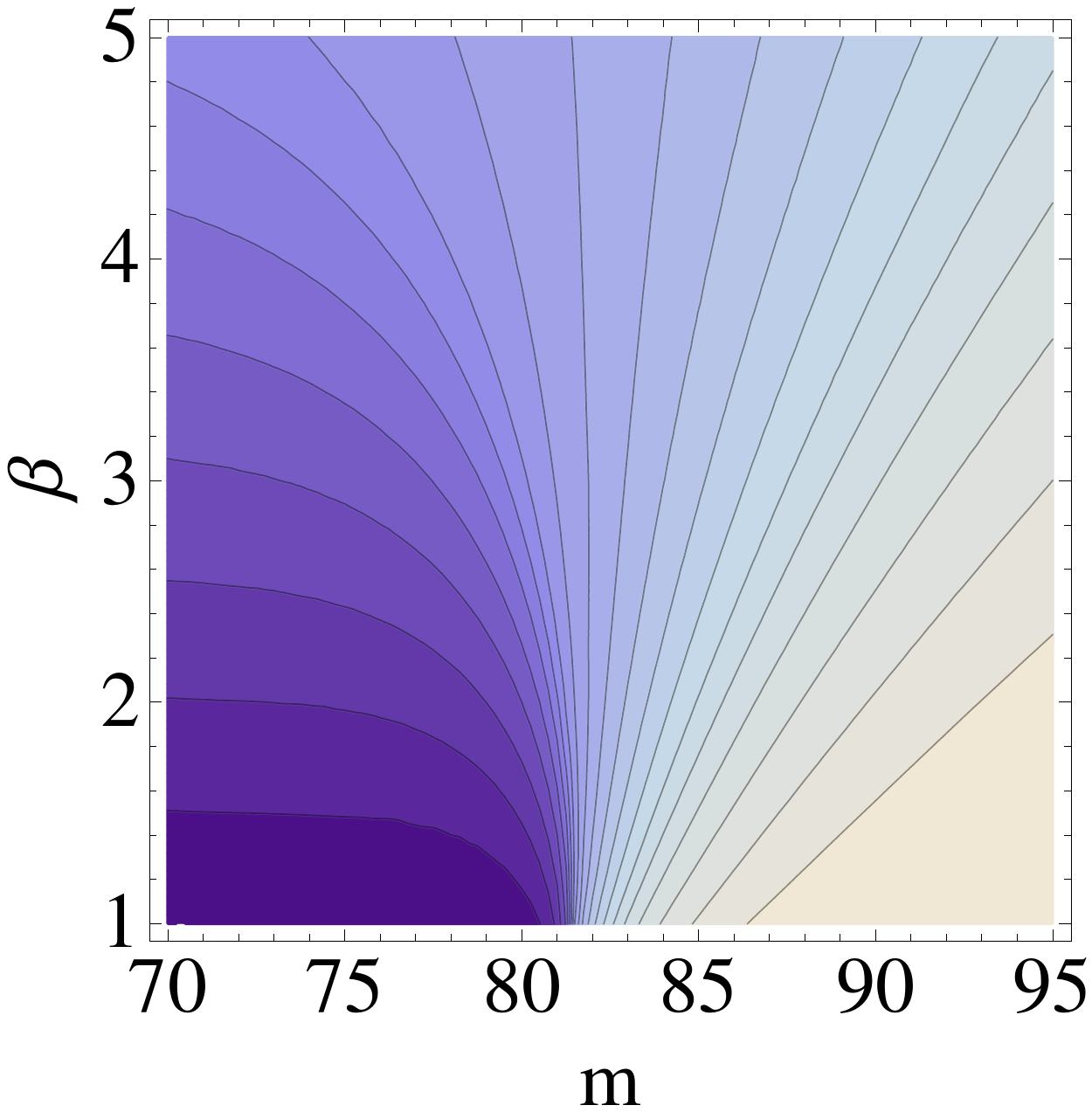}
\end{center}
\caption{Contours of constant $p_s(m,\beta)/p_b(m,\beta)$ (equivalently lines of constant $V_\text{opt}^\text{(Gumbel)}$) in the $(m,\beta)$ plane for $M_b=80$ and $M_s=85$.  We can see that for $m>M_b$ the contours are straight lines and thus $X$ is the optimal variable.}
\label{Gumbel}
\end{figure}

\subsection{Choosing the separation scale $M$}

The above constructions shows that $M$ can play a dynamic role in the definition of $X$.  The interpretation of $M$ as the scale of Standard Model physics does not require that it be fixed ahead of time, since detector resolutions can distort the {\it reconstructed} scale away from the {\it true} scale.  We can further quantify the dependance of $X$ on $M$ by studying the efficacy of $X$ over $m$ with respect to $s/\sqrt{b}$. 

\begin{prop}
The maximum significance for $X_M$, taken over all values of $M$, can be no worse than the maximum significance of $m$ itself.
\end{prop}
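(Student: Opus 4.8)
The plan is to show that every event selection produced by a threshold cut on $m$ can \emph{also} be produced by a threshold cut on $X_M$ for a suitably chosen scale $M$. If that holds, then the family of selections indexed by the pair $(M,c)$ (scale and cut value for $X_M$) contains, as a subset, the family of selections indexed by a single cut $m_{cut}$ on $m$. Since $\hat{s}=s/\sqrt{b}$ depends only on which events pass the selection, maximising it over the larger family can be no worse than maximising it over the smaller one, which is exactly the claim.

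First I would write the best significance attainable from $m$ as $\max_{m_{cut}}\hat{s}(\{m>m_{cut}\})$, where the argument of $\hat{s}$ denotes the set of events passing the selection and $s,b$ are the resulting signal and background yields. The key observation is that, because the resolution width $\sigma_m$ is strictly positive, the sign of $X_M=(m-M)/\sigma_m$ is governed entirely by $m-M$. Hence for the particular choice $M=m_{cut}$ together with the cut value $c=0$, the selection $\{X_M>0\}$ coincides event-by-event with $\{m>m_{cut}\}$: precisely the same events pass, so $s$, $b$, and therefore $\hat{s}$ itself are identical. This gives, for every $m_{cut}$, an explicit pair $(M,c)=(m_{cut},0)$ reproducing the same significance, so that
\begin{align}
\max_M \max_c \hat{s}(\{X_M>c\}) \;\ge\; \max_{m_{cut}} \hat{s}(\{m>m_{cut}\}),
\end{align}
because the left-hand maximum ranges over a collection of selections that includes every selection contributing to the right-hand side.

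The conceptual content is entirely in this embedding; the only point requiring care — and the nearest thing to an obstacle — is bookkeeping about cut conventions. One must fix whether ``significance'' is defined via upper cuts ($m>m_{cut}$), lower cuts, or both, and then apply the same convention consistently on both sides of the inequality. The sign equivalence above handles upper cuts directly, and by flipping the sign of $c$ (or cutting $\{X_M<0\}$) the identical argument covers lower cuts; in every case it is the positivity of $\sigma_m$ that guarantees varying $M$ alone already recovers every pure-$m$ threshold. The additional freedom to optimise over $c\neq 0$ and to exploit the $\sigma_m$-dependence baked into $X_M$ can only enlarge the attainable family further, so it can help but never hurt, which reinforces the inequality rather than threatening it.
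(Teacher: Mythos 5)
Your argument is correct and is essentially the paper's own proof: take the optimal cut $k$ on $m$, set $M=k$, and observe that the cut $X_M>0$ selects exactly the same events because $\sigma_m>0$. Your write-up merely makes explicit the positivity of $\sigma_m$ and the embedding of the one-parameter family of $m$-cuts into the two-parameter $(M,c)$ family, which the paper leaves implicit.
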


\begin{proof}
Suppose tha $k$ is a cut value on $m$ such that $\hat{s}(k)=\max_c\hat{s}$ for $m$.  Then, let $M=k$ and then a cut of $X=0$ will reproduce the same significance as $\hat{s}(k)$.
\end{proof}

\begin{cor}
There is no reason to be afraid of using $X_M$ instead of $m$ since (provided the value of $M$ is chosen sensibly) an $X_M$-only analysis cannot be worse than an than an $m$-only analysis. 
\end{cor}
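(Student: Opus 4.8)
The plan is to prove the claim by exhibiting a single, explicit choice of the separation scale $M$ together with a cut on $X_M$ that reproduces \emph{exactly} the optimal $m$-only selection. Since the maximum of the significance over all $M$ can only be at least as large as the value attained at any particular $M$, this suffices.

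The key observation I would establish first is that, because the event-by-event resolution $\sigma_m$ is strictly positive, the sign of $X_M = (m-M)/\sigma_m$ is controlled entirely by its numerator. Consequently the event selection defined by the cut $X_M>0$ is identical, event for event, to the selection defined by $m>M$: the positive denominator cannot change which events pass. This is the only structural fact required, and it is immediate.

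Next I would let $k$ denote a cut value on $m$ achieving $\hat{s}(k)=\max_c\hat{s}$ for the $m$-only analysis. Choosing $M=k$ and cutting at $X_M>0$ then selects precisely the events with $m>k$ by the observation above, so the retained signal count $s$ and background count $b$ --- and hence the significance $\hat{s}=s/\sqrt{b}$ --- coincide with those of the optimal $m$-cut. Because the maximum significance attainable with $X_M$ ranges over all values of $M$ (and all cut thresholds on $X_M$), it is bounded below by the value realised at this particular choice, giving $\max_M\hat{s}(X_M)\geq\hat{s}(k)$, which is the assertion.

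I do not expect a substantive obstacle: the result is essentially the statement that the family of $X_M$-cuts contains the family of $m$-cuts as a special case. The only points requiring care are the strict positivity of $\sigma_m$, which underlies the sign equivalence, and correctly tracking the direction of the inequality when passing from a single choice of $M$ to the supremum over $M$.
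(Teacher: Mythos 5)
Your proposal is correct and follows essentially the same route as the paper: set $M=k$ for $k$ the optimal cut on $m$ and observe that the cut $X_M>0$ selects exactly the events with $m>k$, so the optimal $m$-only significance is attained within the family of $X_M$ analyses. The only difference is that you make explicit the (implicitly assumed) strict positivity of $\sigma_m$, which the paper leaves unstated; this is a harmless and slightly more careful presentation of the same argument.
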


Now, consider a kinematic variable $m$ with zero resolution maximum $\tilde{m}$.   The value of $M$ which maximizes $\max_{c}\hat{s}_{X(M)}(c)$ need not be equal to $\tilde{m}$.  Obviously, if $\sigma$ is constant over all events, $X$ induces the same ordering on events as $m$ and so any value of $M$ maximizes $\hat{s}$.  Intuitively, it would seem like for varying resolutions, the optimal $M$ should be greater than $\tilde{m}$, but this need not be the case.

 \begin{prop}
Consider a kinematic variable $m$ with zero resolution maximum $\tilde{m}$.  The optimal value of $M$ may be less than $\tilde{m}$.
\end{prop}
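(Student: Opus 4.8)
The statement is an existence claim, so the plan is to exhibit a single explicit toy model in which the value of $M$ that maximises $\max_c\hat{s}_{X(M)}(c)$ lies strictly below $\tilde{m}$. The organising observation is geometric. A cut $X_M>c$ is, by definition of $X_M=(m-M)/\sigma$, exactly the half-plane $m>M+c\sigma$ in the $(\sigma,m)$ plane. For fixed $M$ the boundary $m=M+c\sigma$ is a line through the pivot point $(\sigma=0,\,m=M)$, and as $c$ varies this line rotates about that pivot. Since every such line has a unique slope $c$ and a unique intercept $M$ on the $\sigma=0$ axis, optimising $\hat{s}=s/\sqrt{b}$ jointly over $M$ and $c$ is the same as optimising over all signal-selecting half-planes bounded by a non-vertical line, and the optimal $M$ is simply the $\sigma=0$ intercept of the best such line. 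The proposition therefore reduces to the assertion that this optimal line can cross the $\sigma=0$ axis below $\tilde{m}$.

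With this reframing, I would build a model with genuinely varying resolution — at minimum two resolution classes, a small $\sigma$ and a large $\sigma$, so that $g(\sigma)$ is not a delta — in which the background has its zero-resolution kinematic edge at $\tilde{m}$ while its poorly-resolved events smear well beyond it, and in which a noticeable fraction of the (higher-scale) signal carries the large $\sigma$. The mechanism to be exploited is then clear: a flat cut ($c=0$, i.e. a cut on $m$ alone) cannot distinguish a poorly-resolved background event that has fluctuated above $\tilde{m}$ from a signal event at the same observed $m$, whereas rotating the boundary to a positive slope while dropping its intercept slightly below $\tilde{m}$ lets it climb fast enough in $\sigma$ to reject the high-$\sigma$ background tail above $\tilde{m}$ while still retaining the high-$\sigma$ signal. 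The cost of the lower intercept is a little extra low-$\sigma$ background admitted near the edge, and arranging the balance so that the benefit dominates is what pushes the optimal intercept, and hence $M$, below $\tilde{m}$.

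To turn the mechanism into a proof I would compute $s(M,c)$ and $b(M,c)$ as the signal and background content of the half-plane $m>M+c\sigma$, maximise $\hat{s}(c)=s/\sqrt{b}$ over $c$ for each $M$ to obtain $\max_c\hat{s}_{X(M)}(c)$, and then compare values of $M$. Choosing the model to consist of a few atoms (or simple boxes) in the $(\sigma,m)$ plane makes the inner optimisation a finite comparison, since only finitely many distinct event subsets are cut out by half-planes, and it lets me write down the maximiser $M^\star$ in closed form and check directly that $M^\star<\tilde{m}$.

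The main obstacle is strictness, not existence. It is easy to produce degenerate models in which signal and background are perfectly separable, for then many intercepts — some below and some above $\tilde{m}$ — all attain the same maximal significance, and such examples do not establish that the optimum can be forced below $\tilde{m}$. The real work is therefore to choose overlapping populations so that every maximiser satisfies $M^\star<\tilde{m}$, which requires verifying that no half-plane with intercept $\ge\tilde{m}$, at any slope, matches the tilted optimum. A secondary technical point is that $s/\sqrt{b}$ is nonlinear, so this is not a textbook linear-separation problem; handling it through a discrete model, where the maximisation reduces to enumerating finitely many achievable $(s,b)$ pairs, is what keeps the verification elementary. As an alternative to an explicit model one could instead start from a configuration whose optimum sits exactly at $M=\tilde{m}$ and show by a first-order perturbation of the resolution distribution that decreasing $M$ strictly increases $\max_c\hat{s}$, but the explicit construction is more transparent and I would present that.
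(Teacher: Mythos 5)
Your approach is essentially the paper's: the paper proves this proposition by exactly the construction you sketch --- delta-function signal and background peaks (zero-resolution maxima $90$ and $80$) smeared by a Gaussian whose width follows a two-point mixture $g(\sigma)=p\,\delta(\sigma-\sigma_1)+(1-p)\,\delta(\sigma-\sigma_2)$ with $p=1/2$, independent of the process --- and then shows (numerically, via the computed $\max_c\hat{s}$ as a function of $M$) that some $M<\tilde{m}$ strictly beats $M=\tilde{m}$. Your half-plane/intercept reframing of the joint $(M,c)$ optimisation and your caution about degenerate separable examples are both sound additions, but be aware that your write-up stops at the plan stage, whereas for an existence claim the explicit instantiation (which the paper supplies, albeit only as a plotted curve rather than an analytic inequality) is the entire content of the proof.
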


\begin{proof}
Consider the model in Eq.~\ref{gaus}.  We know that if the distribution of $\sigma(m)$ is also a delta function, then $X$ and $m$ will give the same significance.  Therefore, take a simple extension:
\begin{align}
g(\sigma) =p\delta(\sigma-\sigma_1)+(1-p)\delta(\sigma-\sigma_2)
\end{align}
where $\sigma_i$ are two fixed values of $\sigma$ and $p\in[0,1]$.  Note that we assume that $\sigma$ is independent of $m$.  With this simple model, we can easily compute the distributions of $m$, $X$ and $\hat{s}$, as seen in Figure~\ref{mstar} for $\tilde{m}=80$ for the background, $\tilde{m}=90$ for the signal, $p=1/2$ and $\rho$ is the signal efficiency, defined by $\rho(c)=\int_c^\infty\mathrm{d}x f(x)$ for $f(x)$ the signal probability density function and $c$ a cut value.  In this setup, we can see that there is an $M<\tilde{m}$ which outperforms the significance at $M=\tilde{m}$.  This is seen clearly in the second plot of the figure in which the low value of $M$ can allow for $X$ to distinguish between low and high resolution events for the signal.  In the limit as $\tilde{m}-M > \sigma$, $X$ will be able to distinguish the low and high resolution events, thus increasing $\hat{s}$.  For $\tilde{m}-M \gg \sigma$, the efficacy of $X$ approaches the constant resolution case and so one cannot gain more by decreasing $M$.
\end{proof}

\begin{figure}
\begin{center}
\includegraphics[scale=0.25]{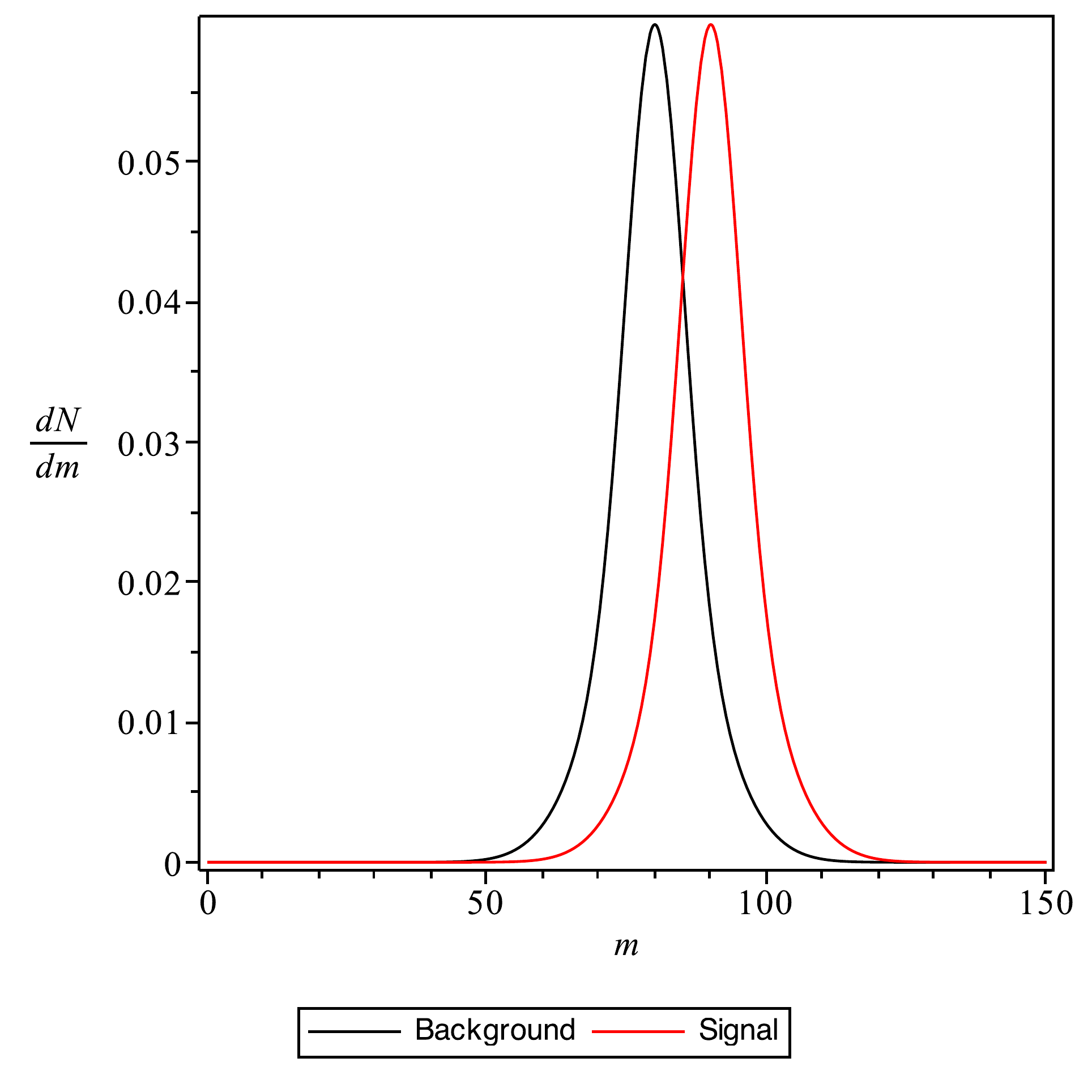}\includegraphics[scale=0.25]{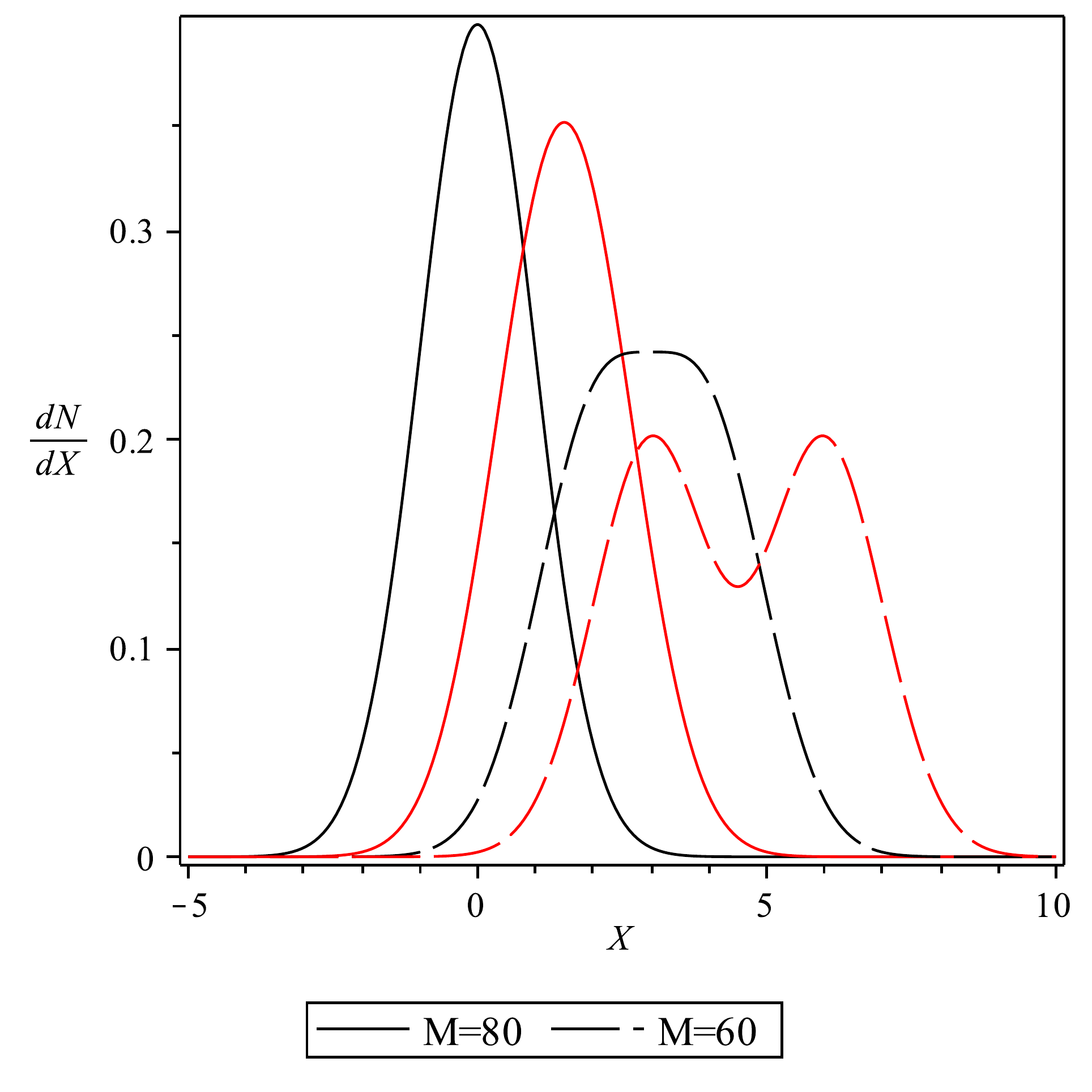}\includegraphics[scale=0.25]{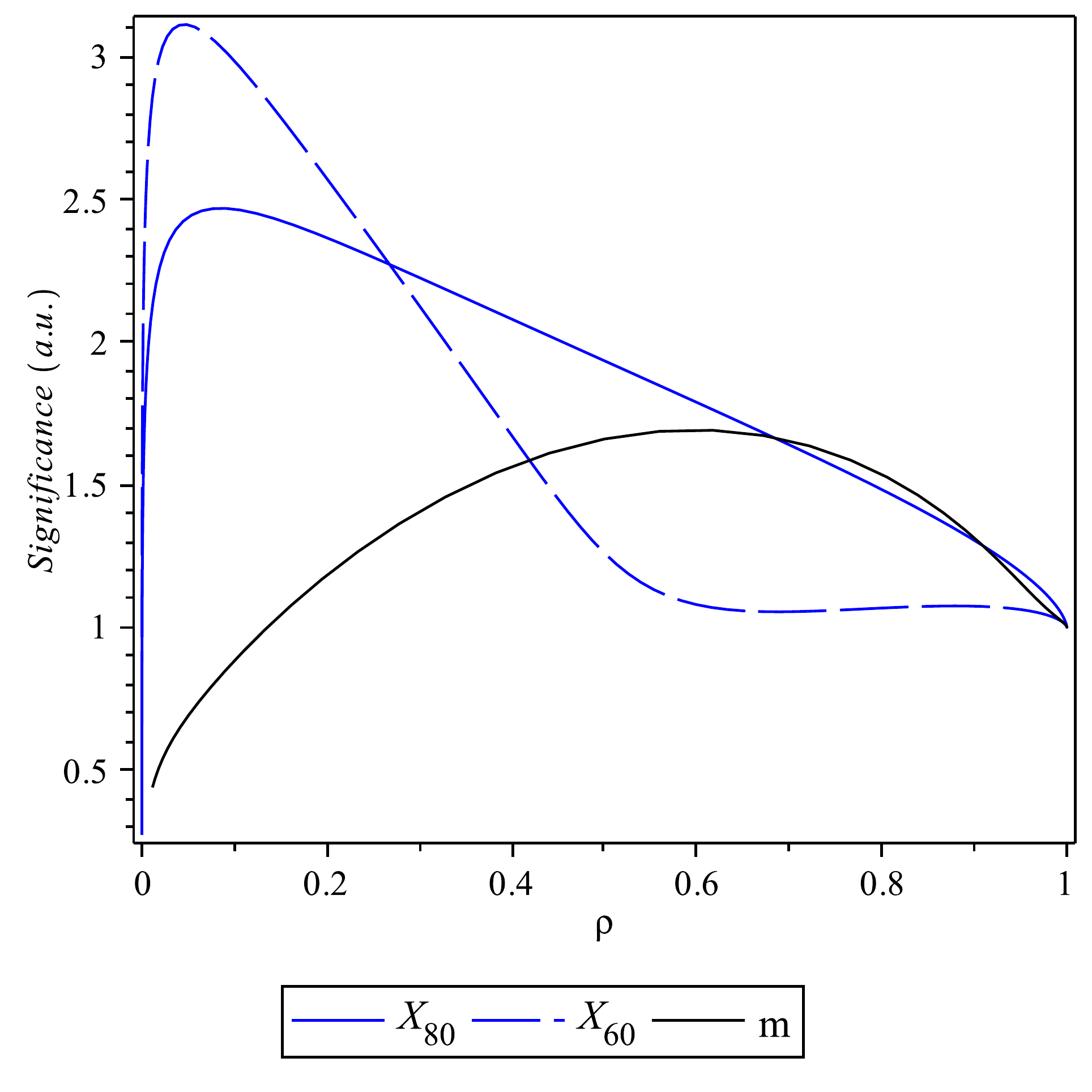}
\end{center}
\caption{These plots illustrate the distributions of $m$, $X$ and $\hat{s}$ for a simple model in which $m$ is always `on shell' at 80 for the background and 90 for the signal.  The resolutions can take one of two values with probability $1/2$, independent of the physics process.}
\label{mstar}
\end{figure}

For further properties of $X$ and related variables, including a discussion of computation, see Appendix~\ref{props}.  

\section{Performance in fully simulated examples of physical interest}
\label{simulate}

Using \textsc{Pythia} 8.170~\cite{Pythia8,Pythia,PythiaSUSY}, we simulate the distributions of $X_M$\footnote{We do not show $P_M$ because we are assuming Gaussian resolution functions and thus $X_M$ captures all the information in $P_M$.  Furthermore, as noted in Appendix~\ref{props}, $P_M$ is very expensive to compute in the tails of the distributions, which are the most important regions for searches for new physics.  The variables $Q_M$ and $Y_M$ (c.f. Appendix~\ref{props}) require model dependance and are in general more involved to compute and we find in the cases we examined that there is not significant benefit over $X_M$.
} in canonical searches that use the variables $m=m_T$ and $m=m_{T2}$.  

\subsection{$W'$ (new gaugue boson), transverse mass significance}

The transverse mass $m_T$ was first used in the discovery of the $W$ boson and the measurement of its mass at CERN by the UA1 collaboration~\cite{UA1}.   Defined by~\ref{Eqmt}, $m_T$ has the property that $m_T\leq m_W$.   Since its first use, $m_T$ continues to be used for precise measurements of the $W$ boson mass, as well as in searches for new physics.  For example $m_T$ is actively in use to search for new gauge bosons like the $W'$~\cite{ATLASWprime, CMSWprime}.  We therefore use a $W'$ search with $m_T$ as a model system to construct the {\it transverse mass significance}.   We concentrate our attention on the leptonic $W/W'$ decays so that the resolution function is determined almost entirely by the resolution in the missing momentum vector.   In this search, the $W$ mass is a natural choice for $M$ in constructing $X_M$.   In our Monte Carlo study, we simulate $pp$ collisions at $\sqrt{s}=14$ TeV.   The W' boson is created with a mass\footnote{excluded by~\cite{ATLASWprime, CMSWprime}, useful here for illustration only} of 100 GeV and the same CKM matrix as the Standard Model W boson.  The resolution of the missing momentum was modeled as $\sigma_{\slashed{E}_T^{\text{}}}^{x,y}=0.5\sqrt{\sum E_T}$, where $\sum E_T$ is the sum of all visible momentum and follows the measured spectra in dijets~\cite{ScalingATLAS}.   The distributions of $m_T,X_M$ and $\hat{s}$ are shown in Fig.~\ref{W}.  The various rows of Fig.~\ref{W} demonstrate the affect of the $W$ width on the efficacy of $X_M$.  We can see that for a vary narrow resonance background, $X_M$ is much better than $m_T$, but as the width becomes large, the advantage decreases.  

\begin{align}
\label{Eqmt}
m_T^2=m_{\nu}^2+m_{\text{lepton}}^2+2\left(
\sqrt{m_{\nu}^2+
\slashed{E}_T^2
}
\sqrt{m_{\text{lepton}}^2+(p_T^{\text{lepton}})^2}-\slash\hspace{-2.5mm}\vec{E}_T^{\text{}}\cdot \vec{p}_T^{\text{   lepton}}\right)
\end{align}

\begin{figure}
\begin{center}
\includegraphics[scale=0.8]{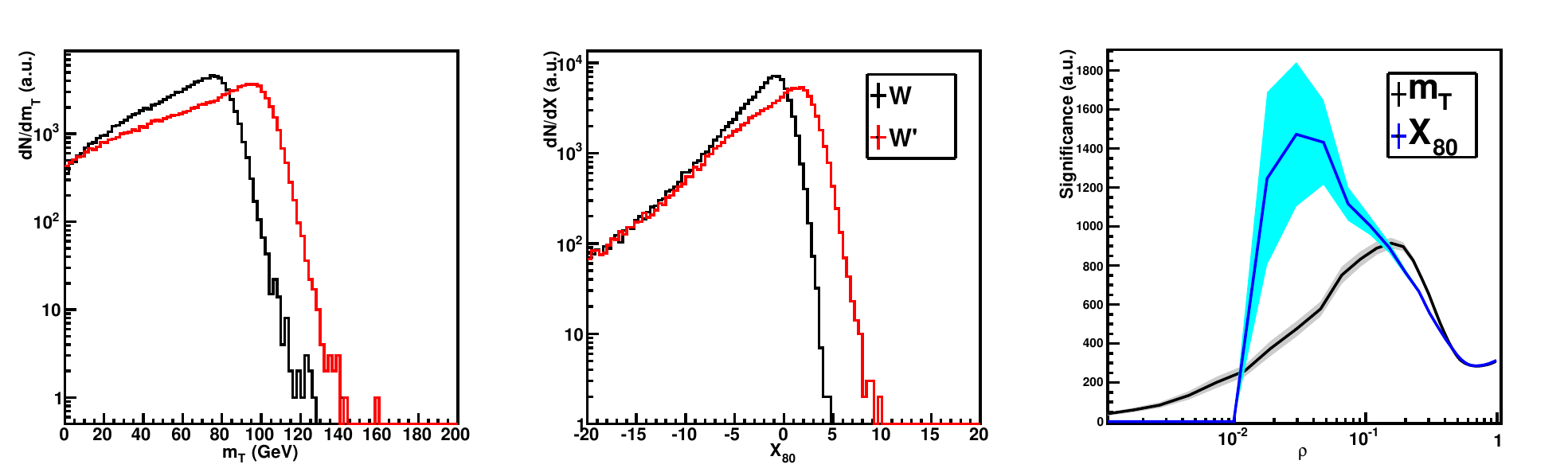}\\
\includegraphics[scale=0.8]{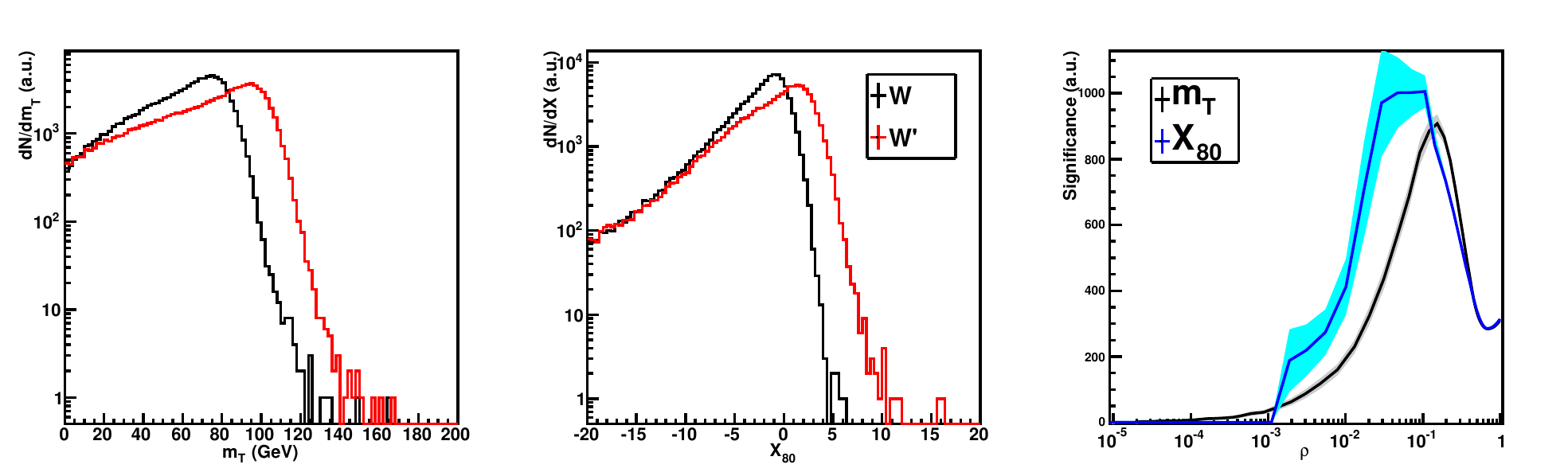}\\
\begin{overpic}[scale=0.8]{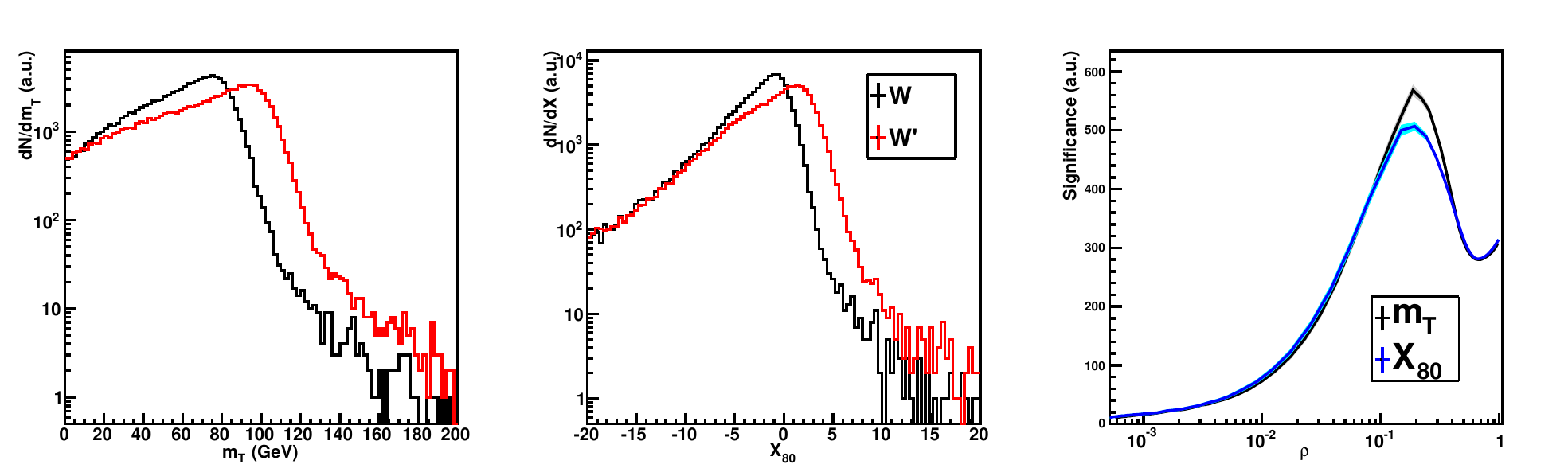}\put(71,14){\includegraphics[scale=0.39]{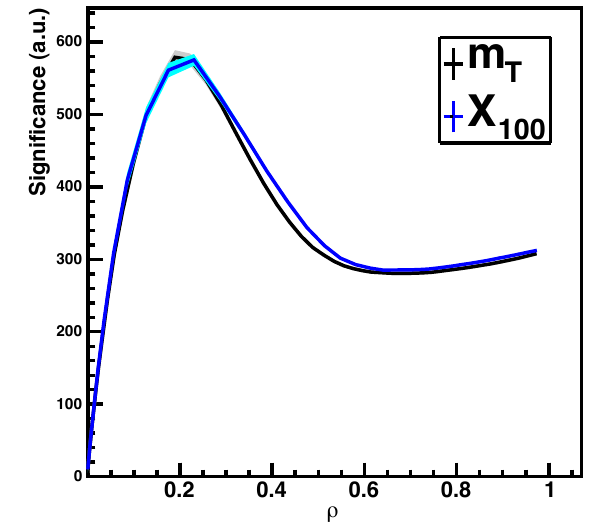}}\end{overpic}
\end{center}
\caption{In each row, the left plot compares the transverse mass distribution for a Standard Model W and a W' with mass 100 GeV.  The middle plot is the corresponding distributions of $X_M$ with $M=80$ GeV. The right plot shows the rejection $s\sqrt{b}$ as a function of the signal efficiency, in arbitrary units.  The bands show the statistical uncertainty due to limited Monte Carlo statistics.  The top row has a boson mass width of 0, the middle has a width of 20\%, and the bottom row has the full width.  We can see that for this fixed value of $M$, the performance of $X_M$ is better than $m_T$ for a narrow width and then worse at higher width.  By construction, $X_M$ cannot be worse than $m_T$ and thus the optimal $M$ in the last row must be different than 80.  The inset plot shows $X_M$ for $M=100$, for which the performance of $X$ and $m_T$ is the same.}
\label{W}
\end{figure}

\subsection{$H\rightarrow\tau\tau$, transverse mass significance}

Another possible use of the $m_T$ significance is in the standard $H\rightarrow\tau\tau$ search (measurement)~\cite{HtautauATLAS,HtautauCMS}.  In the dilepton channel, the dominant background is Z boson production and so the natural value for $M$ is $90$ GeV.  Figure~\ref{Higgs} shows the distributions of $m_T$ (between the total missing transverse momentum and the two lepton composite system), $X_M$, and $\hat{s}$ for a 125 GeV Higgs.  The optimal value of $M$ was found to be less than $90$, as indicated in the diagram.  The $\hat{s}$ figure shows that there can be a significant improvement from $X_M$ over $m_T$.

\begin{figure}
\begin{center}
\includegraphics[scale=0.8]{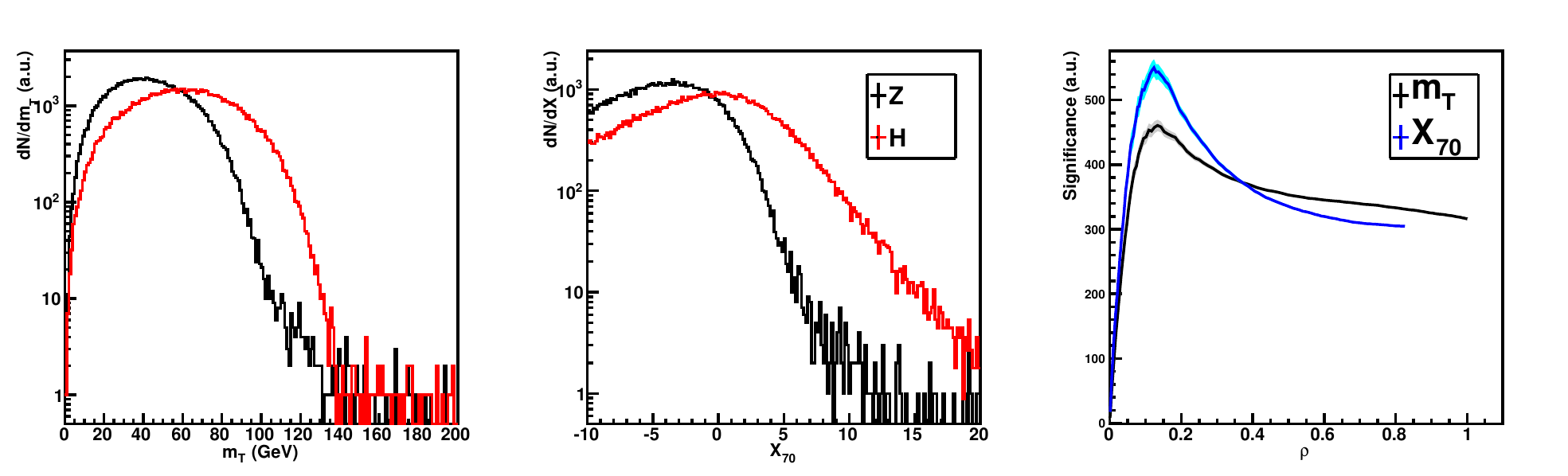}
\end{center}
\caption{The left plot is the $m_T$ distribution for dileptonic $Z\rightarrow\tau\tau$ and $H\rightarrow \tau\tau$  for a 125 GeV Higgs.  The middle plot is the corresponding X curve with M=60 and the right plot is the rejection versus efficiency relationship.}
\label{Higgs}
\end{figure}

\subsection{Pair production of light stops, $pp\rightarrow \tilde t \tilde t X$,  stransverse mass significance}

The transverse mass is very effective when there is one missing particle in an event topology, such as a neutrino.  However, with pair production of missing particles, additional considerations are required.  One natural generalization of $m_T$ is the variable $m_{T2}$~\cite{FirstDef}, defined by Eq.~\ref{eqMT2} for a symmetric event topology involving one visible particle and one missing particle in each branch.  The missing particle in branch $\chi\in\{a,b\}$ has transverse momentum $p_{T\chi}$ and $m_{T\chi}$ is the transverse mass of one branch formed by the corresponding missing particle momentum and the measured visible particle momentum.  Further generalizations of the $m_{T2}$ variable have been studied and applied to Tevatron and LHC data for mass measurements and searches for new physics.  For example, consider direct stop squark production in $R$-parity conserving SUSY.  There is a lot of interest now at the LHC in searches for these signatures for light stop squarks with all the other sparticles very heavy.  One such search in ATLAS uses $m_{T2}$ in the dileptonic channel~\cite{dileptonstopATLAS}.  It is this model that we use as our testing ground to construct the {\it stransverse mass significance}.  With the leptons as the visible particles in the definition of $m_{T2}$, this system once again has the feature that the resolution is mostly due to the missing momentum vector.   Since $t\bar{t}$ is the dominant background, we take $M=80$ GeV.  Here, we only consider the decay $\tilde{t}\rightarrow t+\mathrm{LSP}$.    The $m_{T2}$ distribution, stransverse mass significance, and $\hat{s}$ are shown in Fig.~\ref{Stop} for a compressed scenario of $m_{stop}=350$ GeV and $m_{LSP}=170$ GeV.

\begin{equation}
\label{eqMT2}
 m_{T2}\equiv\underset{\vec{p}_{Ta}^C+\vec{p}_{Tb}^C=\hspace{1mm}\slash\hspace{-2mm} \vec{E}_{T}}{\min} \{\max (m_{Ta},m_{Tb})\}
\end{equation}

\begin{figure}
\begin{center}
\includegraphics[scale=0.8]{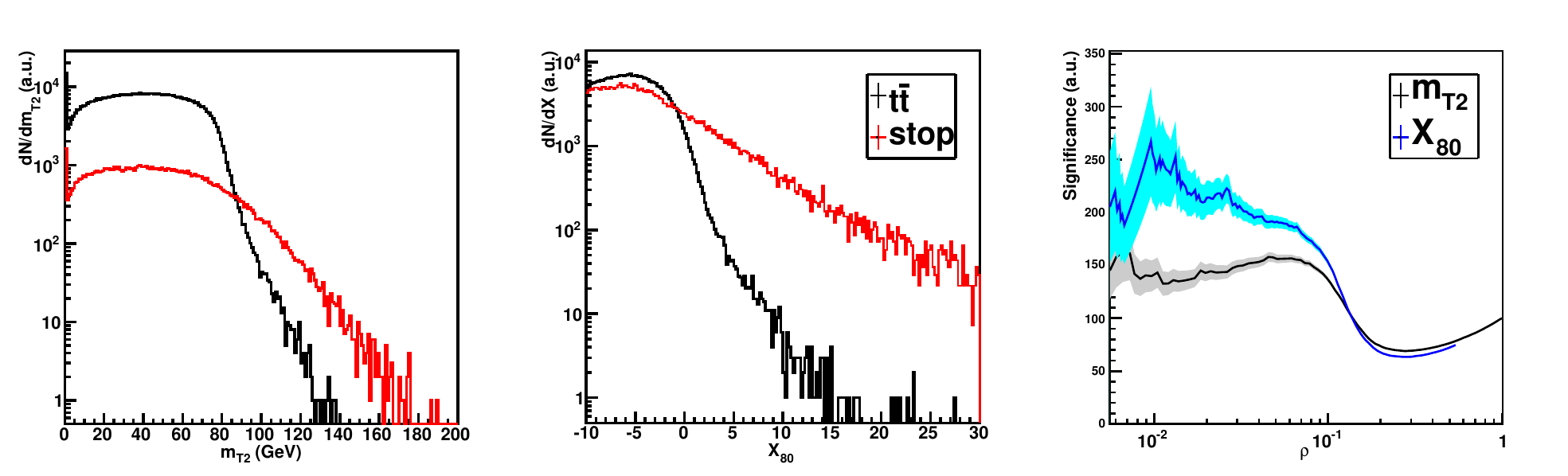}
\end{center}
\caption{The left plot is the $m_{T2}$ distribution for for dileptonic $t\bar{t}$ and $\tilde{t}\rightarrow t+\mathrm{LSP}$ for a 350 GeV stop and 170 GeV LSP.  The middle plot is the corresponding X curve with M=80 and the right plot is the rejection versus efficiency relationship.}
\label{Stop}
\end{figure}

\section{Conclusions}

Given any bounded kinematic variable $m$, we have constructed the significance variable $X_M$ and its variants $Y_M, P_M$ and $Q_M$ which generalize the idea of missing transverse momentum significance.  We have proved that (for an appropriate choice of $M$) the significance variable $X_M$, alone, {\em cannot} perform worse than the variable $m$ upon which it is based.  We have found concrete and physically interesting examples of the significance variable $X_M$ performing better than $m_T$ or $m_{T2}$ as a discrimination variable.  In particular, for $H\rightarrow\tau\tau$ we find that $X_M$ can outperform $m_T$ with respect to  $s/\sqrt{b}$ by $\sim20\%$ and for direct stop production $X_M$ is better than $m_{T2}$ by $\sim 30\%$.

\vspace{5mm}

Even though we have seen improvements from $X_M$ in some standard applications of bounded kinematic variables, the main purpose of this paper is to make a case that event-by-event resolution information should be included in all analyses.  The $X_M$-like significance variables provide a simple algorithm that may capture most of the relevant discriminatory information. When $X_M$ is not (nearly) optimal, the resolution information should be integrated into analyses with an MVA or a dedicated derivation of the optimal significance variable(s) for the analysis in question.   We hope that significance variables will now become part of the experimentalists standard toolbox.  

\section{Acknowledgments}

This work was supported by the Winston Churchill Foundation of America and by Peterhouse, Cambridge.  We thank members of the Cambridge Supersymmetry Working Group for useful discussions.

\appendix

\section{Properties of $X_M$ and related variables}
\label{props}

To begin, we need to show that the significance variable $X_M$ does indeed add new information over a search using $m$ alone.  This is not obvious, since it is often the case that the resolution of $m$ is uncorrelated with the underlying physics process.  In other words, the distribution of $\sigma(m)$ is the same for both signal and background.  Therefore, on its own $\sigma(m)$ does not provide any useful information.  To quantify the statement that $X_M$ adds new information, we can show that if some event $i$ lands in the tail region of $m$, it need not be in the tail region of $X_M$. 

\begin{prop}
For $N$ events, if $m$ induces an ordering on the events such that $m_1<m_2<\cdots <m_N$, then it is not necessarily true that $X_M^{(1)}<X_M^{(2)}<\cdots <X_M^{(N)}$. 
\end{prop}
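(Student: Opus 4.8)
The claim is an existence-of-counterexample statement --- the ordering need \emph{not} be preserved --- so the plan is simply to construct one explicit pair of events that reverses. Recall that $X_M^{(i)} = (m_i - M)/\sigma_i$, where $\sigma_i$ is the (event-by-event) resolution attached to event $i$. The key structural fact I would lean on is the assumption already made for the Gaussian model of Eq.~\ref{gaus}: the distribution $g(\sigma)$ is not a delta function, so two events are permitted to carry different resolutions even when their measured values of $m$ are comparable. Because $\sigma$ enters $X_M$ in the denominator, a larger $m$ accompanied by a sufficiently larger $\sigma$ can produce a \emph{smaller} value of $X_M$, and this is all that is needed.

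Concretely, I would take $N = 2$ (which clearly suffices, since any violation among two events breaks the full chain) and choose two events both lying above the separation scale, say $m_1 - M = \delta_1 > 0$ and $m_2 - M = \delta_2 > 0$ with $\delta_1 < \delta_2$ so that $m_1 < m_2$ as required. It then remains to pick the resolutions so that the induced order reverses, i.e.
\begin{align}
\frac{\delta_1}{\sigma_1} > \frac{\delta_2}{\sigma_2}.
\end{align}
This is achieved whenever $\sigma_2/\sigma_1 > \delta_2/\delta_1 > 1$; for instance $\delta_1 = 1$, $\sigma_1 = \tfrac{1}{10}$, $\delta_2 = 2$, $\sigma_2 = 1$ gives $X_M^{(1)} = 10 > 2 = X_M^{(2)}$ while $m_1 < m_2$. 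Such a configuration is admissible because $\sigma$ is independent of $m$ in the model, so assigning a small resolution to the lower-$m$ event and a large resolution to the higher-$m$ event is perfectly consistent with $g(\sigma)$ having support at both values.

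The argument involves no real obstacle; the only point requiring any care is to confirm that the constructed configuration is legal within the stated model rather than an artefact of the choice, and this is immediate from the independence of $\sigma$ and $m$. I would close by remarking that this reversal is precisely the mechanism by which $X_M$ carries information beyond $m$ alone: an event sitting in the tail of $m$ can be pulled out of the tail of $X_M$ by a poor resolution, and conversely, which is the qualitative conclusion the surrounding discussion is after.
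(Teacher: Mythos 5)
Your counterexample is correct and settles the proposition: with $m_1<M+1<M+2=m_2$ wait--- more precisely, with $m_1-M=1$, $\sigma_1=\tfrac{1}{10}$, $m_2-M=2$, $\sigma_2=1$ one indeed gets $X_M^{(1)}=10>2=X_M^{(2)}$ while $m_1<m_2$, and $N=2$ suffices to break the chain. The paper proves the same statement but from a different angle: rather than fixing $M$ and choosing resolutions, it treats each $X_M^{(i)}=(m_i-M)/\sigma_i$ as a straight line in the $(M,X_M)$ plane and locates the crossing point of two such lines, showing that the ordering of $X_M$ flips precisely at the critical value $M^*=\Delta\equiv(m_2\sigma_1-m_1\sigma_2)/(\sigma_1-\sigma_2)$ when $\Delta>0$, and never flips when $\Delta<0$ or the lines are parallel. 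Your route is more elementary and is arguably the cleaner proof of the bare existence claim; the paper's route buys additional structure --- a characterization of \emph{which} values of $M$ reorder a given pair of events --- which it then leans on in the surrounding discussion of how the choice of $M$ reshuffles the event ordering (in your example $\Delta=M+8/9>M$, consistent with their criterion). One small remark: the proposition is not tied to the Gaussian toy model, so the only hypothesis you genuinely need is that two events may carry different $(m,\sigma)$ pairs; your appeal to the resolution distribution not being a delta function already supplies this, so the "legality" check you flag is indeed immediate.
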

 
\begin{proof}
 We can show this simply by demonstrating the $M$ dependance of $X_M$.  It is easiest to see when $N=2$ and to view $X_M$ as a function of $M$.  There are two possible configurations, as illustrated in Figure~\ref{Mdependance}.  In $(X_M,M)$ space, $X_M$ is a linearly decreasing function of $M$.   The quantity which controls the ordering of $X_M$ is $\Delta \equiv (m_2\sigma_1 -  m_1\sigma_2)/(\sigma_1-\sigma_2)$.   When $\Delta<0$ or infinite in magnitude, then $X_M^{(1)}>X_M^{(2)}$ for all $M$.  However, if $\Delta>0$, then there is a critical $M^*$ such that for $M<M^*$, $X_M^{(1)}>X_M^{(2)}$ for $M>M^*$, $X_M^{(1)}<X_M^{(2)}$.  The value of $M^*$ is $\Delta$.  For $N>2$, the situation is more complicated, but the result is the same; different values of $M$ can rearrange the distribution of events based on $X_M$ from the distribution based on $M$.   One can generalize the plots in Figure~\ref{Mdependance} for $N>2$.  Note that the distribution of points of intersection with the $M$ axis forms the observed distribution of $m$.
\end{proof} 

		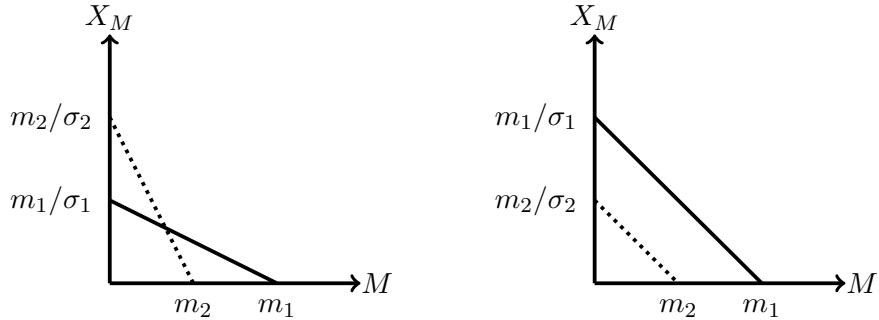
\begin{figure}
         	\begin{center}
		\begin{tikzpicture}[scale=1.1]
		\draw[->,thick,line width=0.5mm] (0,0.0) -- (0,3);
		\draw[->,thick,line width=0.5mm] (0,0.0) -- (3,0);
		\draw[thick,line width=0.5mm] (0,1) -- (2,0);
		\draw[thick,line width=0.5mm,dotted] (0,2) -- (1,0);
		\node at (3.2,0) { $M$};
		\node at (2,-0.3) { $m_1$};
		\node at (1,-0.3) { $m_2$};
		\node at (-.7,2) { $m_2/\sigma_2$};
		\node at (-0.7,1) { $m_1/\sigma_1$};
		\node at (0,3.2) { $X_M$};
		\begin{scope}[shift={(5.8,0)}]
		\draw[->,thick,line width=0.5mm] (0,0.0) -- (0,3);
		\draw[->,thick,line width=0.5mm] (0,0.0) -- (3,0);
		\draw[thick,line width=0.5mm] (0,2) -- (2,0);
		\draw[thick,line width=0.5mm,dotted] (0,1) -- (1,0);
		\node at (3.2,0) { $M$};
		\node at (0,3.2) { $X_M$};
		\node at (2,-0.3) { $m_1$};
		\node at (1,-0.3) { $m_2$};
		\node at (-0.7,2) { $m_1/\sigma_1$};
		\node at (-0.7,1) { $m_2/\sigma_2$};
		\end{scope}
		\end{tikzpicture}
		\end{center}
		\caption{The dependance of $X_M$ on $M$ for two events with $\Delta\equiv (m_2\sigma_1 -  m_1\sigma_2)/(\sigma_1-\sigma_2) > 0$ in the left plot and $\Delta\in [-\infty,0)\cup\{\infty\}$ in the right plot.}
		\label{Mdependance}
		\end{figure}

Now, we return to the original motivation for constructing a new variable from $m$.  We observed that in the absence of detector resolution, $m$ has a kinematic maximum $M$.  If we let $m^{\text{true}}$ denote the value of $m$ that we would observe given a delta function response function from the detector, then this means that the probability that $\Pr(m^{\text{true}}>M)=0$.  We therefore are motivated to try to compute the probability that $m^{\text{true}}>M$ for a given event since this is zero for the Standard Model background.  However, since we do not know the true value, the best we can do is compute 

\begin{align}
Q_M\equiv \Pr(m^{\text{true}}>M|m^{\text{observed}}).
\end{align}

At first, it may seem like $Q_M$ and $P_M$ (from Eq.~\ref{defofP}) contain the same information, but in fact this is not the case.  
  
  \begin{prop}
  If $P_M$ induces an ordering on $N$ events given by $P_M^{(1)}<P_M^{(2)}<\cdots <P_M^{(N)}$, then it is not necessarily the case that $Q_M^{(1)}<Q_M^{(2)}<\cdots <Q_M^{(N)}$.  
  \end{prop}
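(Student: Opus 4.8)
The plan is to follow the template of the two preceding propositions and produce an explicit two-event counterexample. The crucial observation, which I would establish first, is that $P_M$ and $Q_M$ are built from genuinely different probabilistic operations. From Eq.~\ref{IntroduceX}, $P_M$ is a \emph{forward} re-smearing of the measured value: it depends on the event only through $X_M=(m^{\text{observed}}-M)/\sigma$, and is a monotone function of it. By contrast $Q_M=\Pr(m^{\text{true}}>M\mid m^{\text{observed}})$ is a Bayesian \emph{posterior} tail probability, and writing it out via Bayes' theorem, $p(m^{\text{true}}\mid m^{\text{observed}})\propto p(m^{\text{observed}}\mid m^{\text{true}})\,\pi(m^{\text{true}})$, shows that it necessarily involves the prior $\pi$ on the true value. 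I would point out immediately that if $\pi$ is flat then the posterior coincides with the re-smearing density and $Q_M\equiv P_M$, so the two induce the same ordering; hence any reordering must be produced by a non-flat prior --- physically, the falling spectra already mentioned in the Gumbel discussion.

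To make the reordering explicit and computable I would take the prior to be Gaussian (the conjugate choice), $\pi(m^{\text{true}})\propto\exp(-(m^{\text{true}}-M)^2/2\tau^2)$, centered for convenience at the threshold $M$. With a Gaussian resolution of width $\sigma$ the posterior is again Gaussian, with shrunken width $s=\sigma\tau/\sqrt{\sigma^2+\tau^2}$ and mean regressed toward $M$. Consequently $Q_M$ is a monotone function of the single quantity
\begin{equation}
W_M\equiv\frac{\mu_{\text{post}}-M}{s}=X_M\,\frac{\tau}{\sqrt{\sigma^2+\tau^2}}.
\end{equation}
The key structural point is now visible: $Q_M$ orders events by $W_M$, which differs from the $X_M$ that orders $P_M$ only by the $\sigma$-dependent shrinkage factor $\tau/\sqrt{\sigma^2+\tau^2}$. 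A large-resolution event therefore has its significance discounted under $Q_M$ relative to a small-resolution event, exactly the regression-to-the-prior effect that $P_M$ ignores.

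Finally I would exhibit two events that reverse. Choosing $M=0$ and $\tau=1$, take event~1 with $(m^{\text{observed}},\sigma)=(0.2,0.1)$ and event~2 with $(2.1,1)$. Then $X_M^{(1)}=2<2.1=X_M^{(2)}$, so $P_M^{(1)}<P_M^{(2)}$, whereas $W_M^{(1)}=2/\sqrt{1.01}\approx1.99>1.49\approx2.1/\sqrt{2}=W_M^{(2)}$, so $Q_M^{(1)}>Q_M^{(2)}$: the ordering is broken, which proves the claim. I expect the only real obstacle to be conceptual rather than computational --- recognizing that the innocuous-looking $Q_M$ secretly depends on a prior while $P_M$ does not, and that their orderings can differ only through it. The technical requirement that the two events carry different $\sigma$ is automatically consistent with the standing assumption that $g(\sigma)$ is not a delta function.
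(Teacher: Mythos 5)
Your proposal is correct, and it is in fact more constructive than the paper's own argument. The paper also reduces to $N=2$ and locates the obstruction in the same place you do --- it Bayes-expands $Q_M^{(1)}-Q_M^{(2)}$ into an integral of likelihood differences weighted by the prior $p(m^{\text{true}})$, and then simply observes that, because of that prior weighting and the ratio $p(m_1^{\text{observed}})/p(m_2^{\text{observed}})$, the sign of this integral need not track the sign of $P_M^{(1)}-P_M^{(2)}$. It stops there: no explicit prior, resolutions, or numbers are exhibited, so the ``proof'' is really a structural plausibility argument. You go further by (i) noting the degenerate case that a flat prior forces $Q_M\equiv P_M$, so any counterexample must exploit a non-flat prior, and (ii) choosing the conjugate Gaussian prior so that $Q_M$ becomes a monotone function of the shrunken significance $W_M=X_M\,\tau/\sqrt{\sigma^2+\tau^2}$, after which two events with different $\sigma$ straddling the shrinkage factor give an explicit, checkable reversal ($X_M^{(1)}=2<2.1=X_M^{(2)}$ but $W_M^{(1)}\approx 1.99>1.48\approx W_M^{(2)}$). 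Your closed-form posterior mean and width are correct, and the numerics check out. What your route buys is an actual counterexample (hence a complete proof of the ``not necessarily'' claim) plus a transparent interpretation --- $Q_M$ discounts high-resolution events by regression toward the prior --- at the modest cost of specializing to the Gaussian-conjugate setting, whereas the paper's expansion is prior-agnostic but never closes the logical gap between ``the integrand looks different'' and ``the ordering actually reverses.''
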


\begin{proof}  
  To see this, consider the case in which $N=2$.  Then, we can compute the difference $Q_M^{(1)}-Q_M^{(2)}$ and relate it to $P_M^{(1)}-P_M^{(2)}$.   Even in the case in which $R$ is a Gaussian, the quantity:
\begin{align}
\label{DeltaQ}
Q_M^{(1)}-Q_M^{(2)}&=\int_M^\infty \left[p(m^{\text{true}}|m_1^{\text{observed}})-p(m^{\text{true}}|m_2^{\text{observed}})\right]dm^{\text{true}}\\\nonumber
&=\frac{1}{p(m_1^{\text{observed}})}\int_M^\infty \left[p(m_1^{\text{observed}}|m^{\text{true}})-\frac{p(m_1^{\text{observed}})}{p(m_2^{\text{observed}})}p(m_2^{\text{observed}}|m^{\text{true}})\right]p(m^{\text{true}})dm^{\text{true}}.
\end{align}
is not necessarily positive given that $P_M^{(1)}-P_M^{(2)}$ is positive.  In this case, $X_M^{(1)}-X_M^{(2)}$ determines $\left[p(m_1^{\text{observed}}|m^{\text{true}})-p(m_2^{\text{observed}}|m^{\text{true}})\right]$.  However, because the ratio of probabilities multiplying the second term in the second line of Eq.~\ref{DeltaQ} could be important and since $X_M$ has $M$ dependance, the integral does not just depend on the values of $p(m_i^{\text{observed}}|m^{\text{true}})$ at the endpoints $\{m,\infty\}$ due to the weighting function $p(m^{\text{true}})$.  
\end{proof}

  Just as we formed $X_M$ out of $P_M$ (Eq.~\ref{defofP}), we could form a variable $Y_M$ from $Q_M$ of the form 
  
  \begin{align}
  Y_M\equiv \frac{m^{\text{observed}}-M}{\sigma_m[R']},
  \end{align}
  
   where $R'=p(m^{\text{true}}|m^{\text{observed}})$.  In the case that $R'$ is a Gaussian, this completely determines the behavior of $Q_M$ in the sense that both $Q_M$ and $Y_M$ induce the same ordering of events.  However, due to falling prior distributions, it is not often the case that $R'$ is exactly Gaussian, though $Y_M$ is still useful because it is easier to compute than $Q_M$.  Even though both $Q_M$ and $Y_M$ aim to probe the truth structure of an event, one drawback is that they both require knowledge of the prior $p(m^{\text{true}})$.  We cannot get this distribution from the observed data, instead relying on Monte Carlo simulations.

\section{Computation of $X_M$, $Y_M$, $P_M$ and $Q_M$}
\label{compute}

  First, we consider the Gaussian variable $X_M$.  Jet and lepton responses are parametrized as a function of their coordinates in $(\eta,p_T)$ space.  This response is defined to be the ratio $p_T^{\text{observed}}/p_T^{\text{true}}$ so we have access to the variance of $p(p_T^{\text{observed}}|p_T^{\text{true}})$.  For $X_M$, however, we would like to know the width of $p(p_T^{\text{(re)measured}}|p_T^{\text{observed}})$.  For ease of notation, let $\rho=p_T^{\text{(re)measured}},\mu=p_T^{\text{measured}},\tau=p_T^{\text{true}}$.  Using the law of total probability and Bayes' Law, we can expand $p(\rho|\mu)$ as in Eq.~\ref{totalprob}.   

\begin{align}
\label{totalprob}
p(\rho|\mu)&=\int p(\rho|\mu,\tau)p(\tau|\mu)d\tau\\\nonumber
&=\int p(\rho|\tau)p(\tau|\mu)d\tau\\\nonumber
&=\int p(\rho|\tau)\frac{p(\mu|\tau)p(\tau)}{p(\mu)}d\tau\nonumber
\end{align}

 Now, suppose that we know the prior distribution $p(\tau)$ in terms of a histogram: $p(\tau)=\sum\alpha_i\delta_i(\tau)$ where $i=1,...,N$ is the number of bins and $\delta_i$ is the indicator function on the bin $i$ over range $[a_i,b_i]$.  Then, in Eq.~\ref{manipulate}, we insert this function into the results from Eq.~\ref{totalprob}.   In Eq.~\ref{manipulate}, $\text{Gauss}(x,\mu,\sigma)$ is a Gaussian with mean $\mu$ and standard deviation $\sigma$ evaluated at $x$. 

\begin{align}
\label{manipulate}
p(\rho|\mu)p(\mu)&=\sum_i\alpha_i\int_0^\infty p(\rho|\tau)p(\mu|\tau)\delta_i(\tau)d\tau\\\nonumber
&=\sum_i\alpha_i\int_{a_i}^{b_i}p(\rho|\tau)p(\mu|\tau)d\tau\\\nonumber
&=\sum_i\alpha_i\int_{a_i}^{b_i}\text{Gauss}(\rho,\tau,\sigma)\text{Gauss}(\mu,\tau,\sigma)\\\nonumber
&=\sum_i\alpha_i\text{Gauss}(\rho,\tau,\sqrt{2}\sigma)\left[\text{erf}\left(\frac{2a_i-\rho-\mu}{2\sigma}\right)-\text{erf}\left(\frac{2b_i-\rho-\mu}{2\sigma}\right)\right]\\\nonumber
&=\text{Gauss}(\rho,\tau,\sqrt{2}\sigma)\sum_i\alpha_i\left[\text{erf}\left(\frac{2a_i-\rho-\mu}{2\sigma}\right)-\text{erf}\left(\frac{2b_i-\rho-\mu}{2\sigma}\right)\right]\\\nonumber
&:=\text{Gauss}(\rho,\tau,\sqrt{2}\sigma)\sum_i\alpha_i(*),\\\nonumber
\end{align}

 Now, we want to understand how $(*)$ in Eq.~\ref{manipulate} varies with $\rho$, since we view $p(\rho|\mu)$ as fixed in $\mu$ and as a function of $\rho$.  In Eq.~\ref{derivative}, we observe that the dependance of $(*)$ in Eq.~\ref{manipulate} on $\rho$ goes to zero as $a_i\rightarrow b_i$ and thus to a good approximation, $p(\rho|\mu)\propto\text{Gauss}(\rho,\mu,\sqrt{2}\sigma)$.  Practically then, to compute $X_M$, one must propagate these `inflated' Gaussians into a formula for the resolution function of $m$.

\begin{align}
\label{derivative}
\frac{d(*)}{d\rho}\propto \text{Gauss}(2b_i,\rho+\mu,\sqrt{2}\sigma)-\text{Gauss}(2a_i,\rho+\mu,\sqrt{2}\sigma).
\end{align}

 If the Gaussian approximation for the resolution function is very good, then an analytic approximation using linear error propagation would be sufficient.  However, to capture non-Gaussian attributes, numeric propagation may be necessary.  In particular, if $m$ is a mass-like variable with a restriction $m>0$, the resolution function will necessarily be non-Gaussian near $m=0$.  In such cases, we can estimate how many random draws are necessary to accurately compute $\sigma_m$.   If $s^2$ is the sample variance, then the variance of the sample variance is given by Eq.~\ref{varofsamplevar}, where $\kappa$ is the excess kurtosis~\cite{VarSampleVar}.  

\begin{align}
\label{varofsamplevar}
\text{Var}[s^2]=\sigma^4\left(\frac{2}{n-1}+\frac{\kappa}{n}\right).
\end{align}

\noindent For an absolute uncertainty on the standard deviation $f$ and an $\mathcal{O}(1)$ standard deviation, one needs

\begin{align}
n=\frac{2+\kappa+f^2+\sqrt{4+4\kappa+4f^2+\kappa^2-2f^2\kappa+f^4}}{2f^2}.
\end{align}

\noindent For $f\ll 1$ and an order $1$ or smaller $\kappa$ (this is zero for a Gaussian),

\begin{align}
n\approx\frac{2+\kappa+\sqrt{4+4\kappa+\kappa^2}}{2f^2}\sim\frac{3}{f^2}.
\end{align}

\noindent For example, one needs $n\approx 300$ for an accuracy of $0.1$ GeV.  The computation for $Y_M$ is similar to $X_M$, except instead of propagating the uncertainties from $p(\rho|\mu)$, one must propagate the uncertainties from $p(\tau|\mu)$, which requires the input of a prior distribution $p(\tau)$.  In general, these priors are expected to not be uniform and thus the propagation must be done numerically as linear error propagation may not be accurate. 

The computation of $P_M$ and $Q_M$ may seem must harder than that of $X_M$ and $Y_M$.  However, this may not be the case.  To ease the notation, we recycle letters from earlier by letting $\rho=m^{\text{(re)measured}} $and $\mu=m^{\text{measured}}$.  Then, we can rewrite $P_M$ as in Eq.~\ref{Prewrite}, where $\Theta$(x) is the Heavyside step function and the expectation value in the last line is taken over the space with measure given by the conditional distribution $p(\rho|\mu)$.

\begin{align}
\label{Prewrite}
P_M &:= \Pr(\rho>M | \mu)\\\nonumber
 &= \int \Pr( \rho>M | \mu,\rho) p(\rho|\mu) d\rho \\\nonumber
 &= \int \Pr( \rho>M | \rho) p(\rho|\mu) d\rho \\\nonumber
 &=  \int \Theta(\rho-M) p(\rho|\mu) d\rho\\\nonumber
 &= \left\langle \Theta(\rho-M)\right\rangle_{(\rho|\mu)}.
 \end{align}
 
 The reason for the different representation of $P_M$ in Eq.~\ref{Prewrite} is that it gives rise to an intuitive method for its computation and an easy way to assess its uncertainty.  Since $\Theta(x)\in\{0,1\}$, we can think of the expectation above as a Bernoulli random variable.  If the real value of $P_M$ is $p$, then the variance of the sample mean is $p(1-p)/n$ and thus the uncertainty is on the order of $\sqrt{p(1-p)/n}$.  For an absolute uncertainty $f$ on the mean $p$, then 

\begin{align}
\label{scaling}
n=\frac{p(1-p)}{f^2}\geq \frac{0.5(1-0.5)}{f^2}=\frac{1}{4f^2}.
\end{align}

 For example, one needs $n\approx 2500$ for an absolute uncertainty of $0.01$.   We make a similar computation for $Q_M$ and note that $Q_M=\left\langle \Theta(m^{\text{true}}-M)\right\rangle_{(m^{\text{true}}|m^{\text{observed}})}$.  The uncertainty bound on $Q_M$ is thus similar to $P_M$, except that one must input truth distributions when sampling.  In order to meaningfully compare $X_M$ and $P_M$, one needs a way of relating a given uncertainty on $X_M$ to an uncertainty on $P_M$.  We can do this quantifying the interpretation of $X_M$ as a `number of standard deviations beyond the endpoint', by using a map $\lambda:\mathbb{R}\rightarrow [0,1]$ given by Eq.~\ref{map}.  Given $\lambda$, we can ask how uncertainties in $X_M$ translate to uncertainties in $\lambda(X_M)$, which we can take as the necessary level of precision needed on $P_M$.  Figure~\ref{lambda} shows the relationship between $\sigma X_M$ and $\sigma\lambda$ for several values of $X_M$.  We can see that if $X_M\sim 1$, then a 10\% uncertainty in $X_M$ corresponds to $\sim 0.05$ absolute uncertainty in $P_M$.  However, if $X_M\sim 4$ then a $0.1$ absolute uncertainty on $X_M$ ($\sim 3\%$) then the required uncertainty on $P_M$ is $\sim 
10^{-5}$.  Since the absolute scaling of $X_M$ and $P_M$ is the same, this shows that it is very expensive to compute $P_M$.  Even though $P_M$ can encode non-Gaussian features of resolution functions, the computation cost may not outweigh the benefit from the computationally cheap $X_M$.

\begin{align}
\label{map}
\lambda(X_M)=\int_{-X}^X\mathrm{d}x\text{ Gauss}(x,0,1).
\end{align}

\begin{figure}
\begin{center}
\includegraphics[scale=0.5]{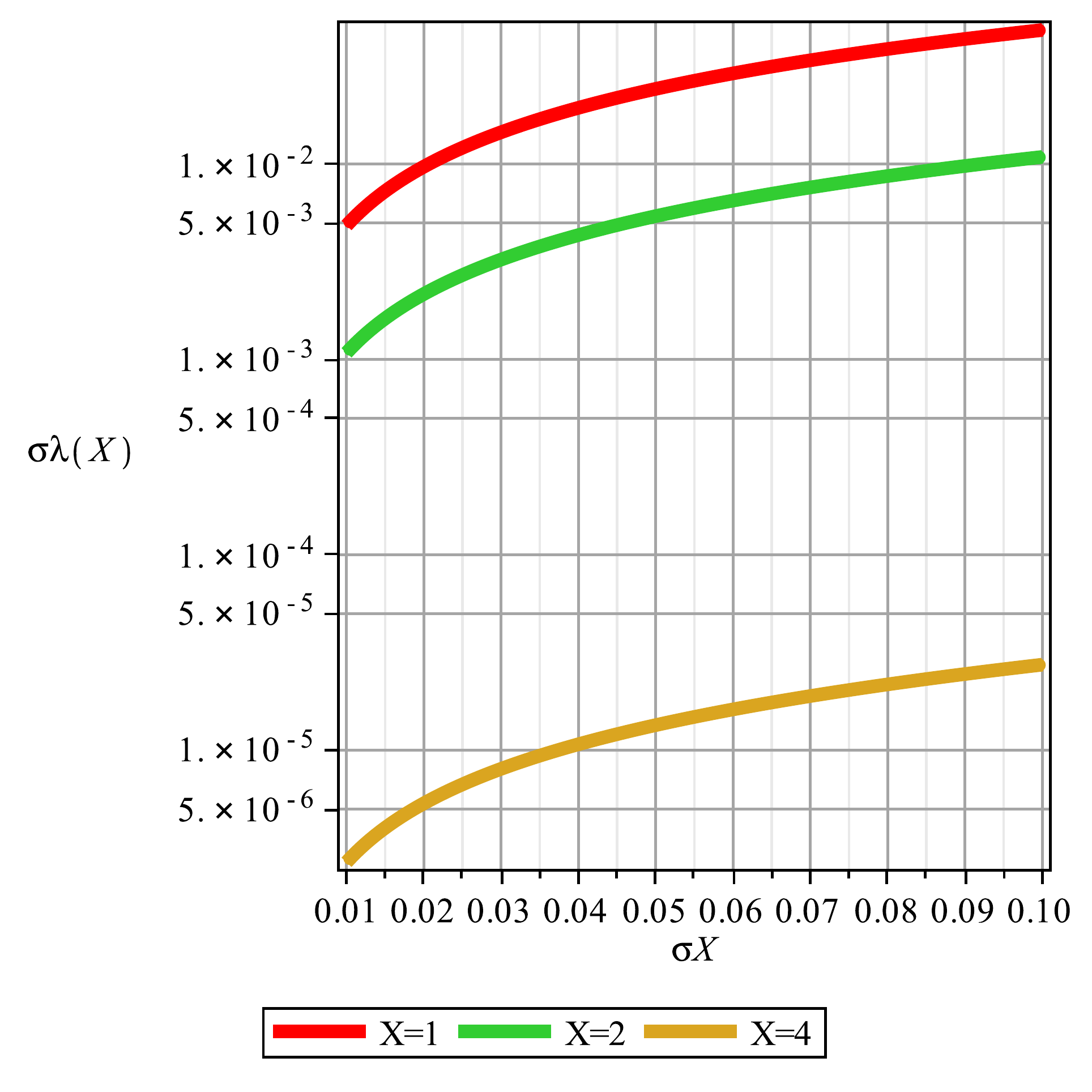}
\end{center}
\caption{Using the map $\lambda$ between significance and probability, we can relate the absolute uncertainty on $X_M$ to an absolute uncertainty on $\lambda(X_M)$, which is the precision we would need on $P_M$ to make a meaningful comparison.}
\label{lambda}
\end{figure}

\section{Optimum use of additional variables}

\label{sec:proofsandstuff}

\newcommand\ee{{\sigma}}
\newcommand\vx{{\bf x}}
\newcommand\xv{{\vx}}
\newcommand\vm{{\bf m}}
\newcommand\thet{{\Theta}}
\newcommand\sigsm{{\sigma}}
\newcommand\sigmin{{\sigma_{\rm min}}}

The conclusions of this appendix on the optimal use of variables are not new.  However, it may be useful to review what appears in the literature to be `common knowledge.' Assume an event is characterized by an observable $x$ and an uncertainty $\ee$.  In other words once an event is recorded, values for $x$ and $\ee$ would be immediately known.  Note that below, $x$ and $\ee$ are treated simply as variables with a joint distribution $p(x,\ee)$ with no particular use made of the concept of $\ee$ as an uncertainty on a measurement made by the other, though that interpretation is possible within the framework.  Let $\vx = \{x,\ee\}$ and consider an arbitrarily function $f(\vx)$ which (in effect) defines a new variable.  For example, $X=\frac{x-M}{\ee}$ is an example of such a function, this time containing a parameter $M$.

\par

Consider two processes $s$ (signal) and $b$ (background) that we want to distinguish.  Signal events have a joint probability density function of the form $p_s(\vx)$, background events $p_s(\vx)$, and the mixture of both has distribution: $p(\vx,\lambda)=\lambda p_s(\vx) + (1-\lambda) p_b(\vx)$ where $\lambda\in [0,1]$ is the fraction of signal events.

\par

Given the processes $s$ and $b$, we can construct many functions $f$ and consider an analysis $A_f$ which takes $N_T$ total events and selects a subset $N\leq N_T$ for which $f\ge 0$.  For each analysis, we can construct a measure of performance by computing the expected value (with respect to $p$) of some optimality metric $K(N_s,N_b)$ where $N_s+N_b=N$ and $N_s$ is the number of true signal events of the $N$ selected by $A_f$.  For example, $K=N_s/\sqrt{N_b}$ is a standard metric.  An analysis $A_f$ is optimal with respect to $K$ if no other choice of $f$ produces a higher value of $K$.   Optimal choices of $f$ are not unique -- we can take an optimal analysis $A_f$ and transform $f$ by wrapping it within any function $g$ that maps non-negative values to non-negative values and maps negative values to negative values and produce the same analysis and thus the same $K$.  The important parts of $f$ are therefore (i) its zeros (which define the boundary between accepted and rejected events) and (ii) its sign as a function of $\vx$.  We will see this fact (re)emerge from the mathematics later.

\par

Hereafter take $f(\vx)$ to be an optimal choice of $f$ for some $K$, and create a (possibly non-optimal) function $g(\vx,\mu)=f(\vx) + \mu h(\vx)$ where $h(\vx)$ is an arbitrary polluting function of $\vx$ and $\mu$ is a scalar parameter controlling the degree of non optimality of $g$. Clearly $g$ becomes optimal when $\mu=0$.   Let
\begin{equation}D_i(\mu)=\int \thet(g(\vx,\mu)) p_i(\vx) d \vx,\label{eq:D}\end{equation}
\noindent for $i\in \{s,b\}$ and $\thet$ is the Heaviside step function.  With this definition, the expected number of signal and background events for $N$ events total in an analysis using the possibly non-optimal discriminant $g(\mu)$ are given by $N_s = N \lambda D_s$ and $N_b = N (1-\lambda) D_b$, and so if $K$ were to take the explicit form $K_\text{example}\equiv N_s/\sqrt{N_b}$ then we would have
$$K^2(\mu) = \frac{N^2 \lambda^2}{N (1-\lambda)} \frac{(D_s(\mu))^2}{D_b(\mu)}.$$

\par

Since $g$ is optimal when $\mu=0$ we know that $\frac{\partial K^2}{\partial \mu} =0$ when evaluated at $\mu=0$, independent of the choice of $h(\vx)$.  Accordingly, a necessary condition for optimality of $f$ (assuming that $N$ is non-zero and that $\lambda$ is neither zero nor one) is
$$ 1 D_b(0) D'_s(0) -{\frac 1 2} D_s(0) D'_b(0)=0$$
in the case that $K=K_\text{example}$, or for arbitrary $K$ would take the form
\begin{equation} \kappa_s D'_s(0) +\kappa_bD'_b(0)=0\label{eq:Constraint}\end{equation}
in which $\kappa_i\equiv \left.\frac{\partial K}{\partial D_i}\right|_{\mu=0}$.
Now we compute
\begin{equation}
D'_i(\mu) = \int \delta(f(\vx) + \mu h(\vx)) p_i(\vx) h(\vx) d\vx,
\end{equation}
\noindent and note that we have freedom to choose any $h(\vx)$.  We exercise that freedom by making the choice $h(\vx) = \delta^{(n)}(\vx - \vm)$ for some and arbitrary constant $\vm$, where $n$ is the dimension of our $\vm$ space.  With this particular choice of $h(\vx)$,  Eq.~\ref{eq:Constraint} becomes:
$$
\kappa_s\delta(f(\vm)) p_s(\vm) + 
\kappa_b\delta(f(\vm)) p_b(\vm)=0,
$$
or equivalently
\begin{equation}
\left[\delta(f(\vm))\right]  \times\left[ \kappa_s  p_s(\vm) +
\kappa_b p_b(\vm) \right] = 0, \label{eq:factors}
\end{equation}
which must be true for any choice of $\vm$.  The presence of the two separate terms (multiplied together) in Eq.~\ref{eq:factors} reminds us of our earlier statements about which parts of $f$ should matter.  For one thing, it shows us that for all values of $m$ which are off the boundary defined by $f(\vm)=0$ the first term (containing the delta function) is zero, and so off of this boundary, there are no special constraints on $f$ deriving from $\kappa_s$, $\kappa_b$, $p_s$ and $p_b$.  These parameters are only relevant insofar as they affect {\em the location of} the optional boundary $f(\vx)=0$.  We see that this optimal boundary is therefore controlled exclusively by the second of the two terms in Eq.~\ref{eq:factors} and its equality to zero.  The boundary determining condition from the second term alone can be re-written as the requirement
\begin{equation}
\frac{p_s(\vm)}{p_b(\vm)} = -\frac{\kappa_b}{\kappa_s}, \label{eq:strong}
\end{equation}
which (we recall) must be satisfied by {\it all} values of $\vm$ which lie on the optimal boundary $f(\vm)=0$.  In particular, the lefthand side of Eq.~\ref{eq:strong} is a function of $\vm$ whereas the righthand side is not!  Accordingly, the values of $\vm$ that occupy the boundary must be exactly those for which $$\rho(\vm) = \frac{ p_s(\vm)}{ p_b(\vm)}$$ is a constant and equal to $-\kappa_b/\kappa_s$.   Effectively, therefore, we now have all we need to know to construct the optimal $f(\vx)$. All we need to do is the following:
\begin{enumerate}
\item Consider the 1-parameter family of curves in the $\{x,\ee\}$-plane that satisfy $\rho(\vx)=\frac{p_s(\vx)}{p_b(\vx)}=const=\rho$, and consider them to be indexed by this real parameter $\rho$.
\item Treat each curve as defining a boundary between two regions of the plane, these regions being named $R^+_\rho$ and $R^-_\rho$ respectively.
\item Let $R=\left\{ R^+_\rho | \rho\in \mathbb{R} \right\} \bigcup \left\{R^-_\rho | \rho \in \mathbb{R} \right\}$ be the set of all such regions.
\item
For each region $r\in R$ calculate the fraction of signal events $F_s(r)$ expected to fall within $r$: 
$$F_s(r) = \int_r p_s(\vx) d\vx$$ and calculate the same quantity for background events:
$$F_b(r) = \int_r p_b(\vx) d\vx.$$
\item The optimal cut boundary $f(\vx)=0$ will be the boundary of the region $r\in R$ for which $F_s(r)/F_b(r)$ equals the value of $\rho$ which defined that region $r$.
\end{enumerate}

%https://twiki.cern.ch/twiki/bin/viewauth/AtlasProtected/EtMissSIG

\end{document}